\newtheorem{theorem}{Theorem}[section]
\newtheorem{openprob}{Open Problem}
\theoremstyle{definition}
\newtheorem{definition}[theorem]{Definition}
\theoremstyle{remark}
\numberwithin{equation}{section}
\date{}
\author{Erik Demaine, Nathan Pinsker and Jon Schneider}
\title{Fast Dynamic Pointer Following via Link-Cut Trees} 
\begin{document}
\maketitle

\begin{abstract}
In this paper, we study the problem of fast dynamic pointer following: given a directed graph $G$ where each vertex has outdegree $1$, efficiently support the operations of i) changing the outgoing edge of any vertex, and ii) find the vertex $k$ vertices `after' a given vertex. We exhibit a solution to this problem based on link-cut trees that requires $O(\lg n)$ time per operation, and prove that this is optimal in the cell-probe complexity model.
\end{abstract}

\section{Introduction}

Consider the following problem. Start with $N$ nodes, labelled $1$ through $N$. Each node contains a pointer to exactly one other node (possibly itself). The goal is to efficiently support the following two operations:

\begin{itemize}
\item
\textbf{Update}: Change the pointer of node $v$ to point to node $w$

\item
\textbf{Query}: Return the node $k$ pointers `ahead' of a node $v$. That is, if we define $f(v)$ to be the node that $v$ points to, we are concerned with computing the value of the $k$th iterate $f^{k}(v)$.
\end{itemize}

Naively, this is easy to do in $O(1)$ time per update and $O(k)$ time per query. In this paper, we present a data structure based on link-cut trees that can perform both of these operations in $O(\lg n)$ time. Moreover, we show that this data structure is optimal in the cell-probe complexity model.

Our paper is organized as follows. In Section \ref{pwork}, we describe some earlier work on the related problem of level ancestors in a tree and how it has previously been adapted to the dynamic case. In Section \ref{mres}, we provide the construction of our data structure and prove that it can perform both of the desired operations in $O(\lg n)$ time. In Section \ref{addops}, we consider additional operations our data structure can support, such as cycle-finding and least common ancestor queries. Finally, in Section \ref{opt}, we demonstrate how to apply a recent $\Omega(\lg n)$ lower bound for dynamic connectivity to show the optimality of our data structure. 

\section{Background}\label{pwork}

While (somewhat surprisingly) very little about this exact problem seems to have been studied, our problem is very similar in flavor to the problem of \textit{level ancestors}. The level ancestors problem asks, given a rooted tree $T$, a node $v$ of this tree, and an integer $d$, what is the $d$th ancestor of $v$ in $T$; that is, what ancestor do we reach by starting at $v$ and walking $d$ steps towards the root. Note that the fast dynamic pointer following problem can be seen as a type of level ancestors problem, with the exception that instead of a rooted tree, the underlying graph is a directed graph where each vertex has outdegree $1$. In fact, by directing each edge in a tree $T$ to point towards the root, we can view the dynamic level ancestors problem as a specific case of the dynamic pointer following problem. In Section \ref{mres} we will see that solving the level ancestor problem for fully dynamic trees plays an important role in solving the problem of fast dynamic pointer following. 

Variants of our problem in the static case have been extensively studied. For example, in \cite{bfc}, Bender and Farach-Colton present an optimal solution for the problem of level ancestors in a static tree that requires $O(n)$ space and $O(1)$ time per level ancestor (earlier constructions that matched this linear-space, constant-time bound were also known). Moreover, in \cite{perm}, Munro, Raman, Raman, and Rao demonstrate a succinct data structure for solving the static version of the pointer-following problem. 

For the case of dynamic trees, much less appears to be known. The results that do exist in this case tend to be aimed more towards specific restrictions on the types of operations we are allowed to perform on dynamic trees. For example, in \cite{lvl}, Dietz shows how to maintain linear space and constant time for queries and updates in the case where our update operation consists of adding new leaves and new roots to our tree. Similarly, Alstrup and Holm show how to obtain constant time for queries and updates in the case where only the addition of new leaves is allowed; they furthermore show how to obtain an amortized inverse Ackermann bound for queries and updates in the case where edges can be added between two disjoint trees in a forest \cite{lvl2}. 

\section{Main result and proof}\label{mres}

\subsection{Overview}

We will show in this section that the \textbf{update} and \textbf{query} operations for dynamic pointer following can both be performed in $O(\lg n)$ time.  We do this by first showing we can find any ancestor of a given node $v$ in a dynamic tree in $O(\lg n)$ time (the \textit{dynamic level ancestor} problem), and then extend this to solve the problem stated above.

\subsection{Fully dynamic level ancestors via link-cut trees}

We first turn our attention to the problem of finding \textit{level ancestors in a fully dynamic tree}.  Similarly as in our problem, we want to efficiently support the following operations:

\begin{itemize}
\item
\textbf{Update}: Change the parent pointer of node $v$ to point to node $w$ or NULL

\item
\textbf{Query}: Return the node at depth $d$ that is an ancestor of a node $v$.
\end{itemize}

To do this, we use the link-cut tree data structure, described by Sleator and Tarjan in \cite{link}.  Given a dynamic tree $d$, we build a link-cut tree by making a node $n'$ for each node $n$ in $d$.  We then use the \textit{link} operation to set the parent of $n'$ to the node in our link-cut tree corresponding to the parent of $n$ in $d$.

To \textbf{update} a node $v$ to point to a node $w$, we simply call \textit{cut(v)} followed by \textit{link(v, w)}.  This takes $O(\lg n)$ time.

To \textbf{query} a node at depth $d$ that is an ancestor of a node $v$, we first call \textit{access(v)}. This changes the structure of our link-cut tree in an advantageous way.  It guarantees that $v$ and all ancestors of $v$ will be contained in a single splay tree rooted at $v$, where all elements in this splay tree are keyed by their depth in the original tree.  This allows us to easily find our desired node in the splay tree: after calling \textit{access(v)}, we simply search the splay tree rooted at $v$ for the node with depth $d$ with a normal splay-tree $find$ operation.  Since the splay tree contains only the path from the root down to $v$, no two nodes in this splay tree have the same depth (key), and so this node is the (unique) ancestor of $v$ at depth $d$. The $access$ and $find$ operations both take $O(\lg n)$ time, so our query operation shares this time bound.

\subsection{Structure of pseudoforests}

In this section, we will briefly consider the properties of graphs formed by allowed sets of nodes and pointers. We call any directed graph where each node has exactly one outgoing edge a \textit{pseudoforest}; note that these graphs exactly specify the state of our nodes and pointers at any given time.

\begin{theorem} \label{cycle}
A connected component in a pseudoforest contains exactly one directed cycle. All edges that do not belong to the cycle of the connected component must belong to a tree rooted at a node belonging to the cycle, where all directed edges of the tree point towards the cycle.
\end{theorem}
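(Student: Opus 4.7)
The plan is to exploit the tight edge count in a pseudoforest: a weakly connected component $C$ on $n$ vertices has exactly $n$ edges, since every vertex contributes exactly one outgoing edge. This rigidity will drive everything, so I would state it at the outset.

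First I would establish existence of a directed cycle. Pick any vertex $v\in C$ and consider the sequence $v,f(v),f^2(v),\dots$ Because $C$ is finite, there exist indices $i<j$ with $f^i(v)=f^j(v)$, and the smallest such $i$ yields a directed cycle $f^i(v)\to f^{i+1}(v)\to\cdots\to f^{j-1}(v)\to f^i(v)$.

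Next I would prove uniqueness, which I expect to be the key step. Suppose two directed cycles $Z_1,Z_2$ existed in $C$. If they shared a vertex $u$, then $f(u)$ is determined by the outdegree-1 condition, so following $f$ forces $Z_1=Z_2$; hence two distinct directed cycles would be vertex-disjoint. But then the underlying undirected graph of $C$ would contain two vertex-disjoint cycles, contradicting the basic fact that a connected graph on $n$ vertices with $n$ edges is a tree plus exactly one extra edge and therefore has a unique undirected cycle. (I would cite or quickly derive this tree-plus-one-edge fact from the standard count: a spanning tree uses $n-1$ edges, and each additional edge creates one new independent cycle.)

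Finally, I would analyze the non-cycle edges. Let $Z$ be the unique directed cycle, of length $k$. Deleting the $k$ edges of $Z$ leaves $n-k$ edges on $n$ vertices, with no undirected cycle remaining (the original had just one, by the previous step), so the resulting subgraph is a forest. In this forest each cycle vertex has outdegree $0$ (its only outgoing edge was on $Z$), while each non-cycle vertex retains its unique outgoing edge. Following outgoing edges from any non-cycle vertex must reach $Z$, since otherwise the pigeonhole argument from Step 1 would produce a second directed cycle, contradicting uniqueness. Therefore each tree in the forest contains exactly one cycle vertex, which serves as its sink, and every edge of the tree points toward that sink, as claimed. The main obstacle is really just Step 2; once uniqueness of the directed cycle is in hand, the rest is edge counting.
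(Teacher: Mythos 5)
Your proof is correct and takes a genuinely different route from the paper's. The paper proves existence via the same pigeonhole argument, but for the ``everything flows into one cycle'' claim it uses an ad hoc infinite-descent argument: it supposes some node $a$'s forward chain enters a second cycle, observes that $a$ must then be joined to the first cycle via an incoming edge (since $a$'s only outgoing path heads elsewhere), and iterates this to build an ever-growing chain of predecessors, impossible in a finite graph. That argument is self-contained but somewhat delicate; in particular it does not carefully justify why the predecessors it constructs are all distinct.

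Your edge-counting argument is cleaner and more standard: a weakly connected component with $n$ vertices and $n$ edges has cyclomatic number $n - n + 1 = 1$, hence a one-dimensional cycle space, so two vertex-disjoint undirected cycles (which two distinct directed cycles in a functional graph necessarily yield, by your outdegree-1 observation) would be independent and exceed the bound. This is exactly the classical ``unicyclic graph'' characterization. One small caveat you should make explicit: take the ``underlying undirected graph'' to be a \emph{multigraph}. If $u$ and $v$ point at each other, or a vertex points at itself, the simple undirected graph has fewer than $n$ edges and the count breaks; in the multigraph these are length-2 and length-1 cycles respectively, and the exact count of $n$ edges (and cyclomatic number $1$) holds verbatim, so the argument survives these degenerate cases. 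You could also close the small gap at the end---that each tree of the residual forest contains exactly one cycle vertex---by a component count: the forest has $n$ vertices and $n - k$ edges, hence exactly $k$ trees, and since every vertex's forward chain terminates at one of the $k$ cycle vertices (all sinks in the forest), each tree contains at least one and therefore exactly one of them. Your approach buys a tighter, more quotable invariant (cyclomatic number $1$) that also explains why nothing pathological can happen; the paper's buys a purely elementary argument that avoids invoking the cycle space, at the cost of some rigor.
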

\begin{proof}
Let $G$ be our pseudoforest, and consider an arbitrary connected component $H$ of $G$.  We first prove that $H$ contains a directed cycle.

Let $n = |H|$, and for any node $a \in H$, consider the set of nodes $\lbrace f(a), f^{2}(a), \dots,$ $f^{n+1}(a) \rbrace$.  By the Pigeonhole Principle, two of these nodes are equal; call these nodes $f^{r}(a)$ and $f^{s}(a)$, where $r < s$.  Then $f^{s-r}(f^{r}(a)) = f^{r}(a)$, implying the node $f^{r}(a)$ is part of a cycle.  Thus $H$ contains a directed cycle.
\\

We next want to prove that all other nodes `point into this cycle' -- in other words, for any node $a$ and a sufficiently large integer $k$, $f^{k}(a)$ belongs to this cycle.  By similar logic as above, we know for some large enough $k$,  $f^{k}(a)$ is part of some cycle.  Assume to the contrary that $f^{k}(a)$ is part of a different cycle than the cycle we previously found.  It still must be connected to the cycle, however, and since $a$ has only one out-pointer, we know $a$ must be connected to our cycle through a node $b$ such that $f(b) = a$.  However, we can apply the same logic to $b$, since $f^{k+1}(b) = f^{k}(a)$, and obtain another node $c$ such that $f(c) = b$.  This means that we can construct an infinite sequence of nodes in our connected component, which is an impossibility since our entire graph has $n$ nodes.  This contradicts our assumption that $f^{k}(a)$ is part of a different cycle, meaning all nodes in this component point towards a single cycle.
\end{proof}

From the proof above, we can see that if we designate a node in a cycle as a ``root'' node, then our graph is a tree save for a single edge: the edge pointing out of our root node, to some other node.  We will make use of this insight in the next section.

\subsection{Reduction to fully dynamic level ancestors}

We are now ready to solve the dynamic pointer following problem.  In the previous subsections, we demonstrated how to use link-cut trees to find level ancestors in a dynamic tree.  We will construct a data structure very similar to our fully dynamic level ancestors data structure, with one important difference: we will keep track of a separate pointer for the root node of each tree in our forest, which will point to one of its children. As we have just shown, this structure encodes all information that is needed for our problem.

Our first problem is how to efficiently implement queries; that is, how can we quickly calculate $f^{k}(a)$ for some node $a$ in our tree and some $k$. Note that if $k < depth(a)$, then $f^{k}(a)$ can be found using our fully dynamic level ancestors algorithm described above. Therefore, it suffices to solve this problem in the case where $k \geq depth(a)$.

Let $r$ be the root node of the tree, let $f(r) = b$, and let $d(a)$ be the depth of $a$ in this tree rooted at $r$; we let $d(r) = 0$. By the structure of our tree, $r$ and $b$ are part of a cycle of length $depth(b)$.  It follows that, for any other node $a$, if $k \geq depth(a)$ then $f^{k}(a)$ must belong to this cycle. Since $f^{k}(a) = f^{k - depth(a)}(b)$, and since $b$ is part of a cycle, $f^{depth(b)}(b) = b$.  This means $f^{k - depth(a)}(b)$ = $f^{(k - depth(a))\bmod b}(b)$.  Since $(k - depth(a))\bmod b < b$, we can answer this query using the level-ancestor methods described above.  Calculating $depth(a)$ and $depth(b)$ takes $O(\lg n)$ time, and calculating $f^{(k - depth(a))\bmod b}(b)$ takes $O(\lg n)$ time, meaning each query has an overall runtime of $O(\lg n)$ as well.

Updates are treated very similarly to those in our solution to dynamic level-ancestor queries.  If a node outside of the cycle in our tree is updated, then the cycle is completely unchanged and we can process the update as we did above.  However, if a node in the cycle is updated, then the cycle has potentially changed.  However, in all cases, the node has been updated must still be in the cycle.  This is because any cycle that has been formed as a result of the update must have the updated edge contained within it, and therefore must have the node that has been updated within it.

As a result, if any node in the tree other than the root is updated, that node will still be in the tree.  We can therefore designate that node as the new root.  The section of the tree pointed to by the new root will be cut and will be inserted as a child of the new root using the \textit{link} operation, and the new root's pointer will be updated to its appropriate child.  All of these operations take $O(\lg n)$, and so update takes $O(\lg n)$ as well.

\section{Additional operations}\label{addops}

In the previous section, we described how to perform the operations Update and Query in $O(\lg n)$ time. For various applications we may wish to support some additional operations; luckily, the data structure presented above is powerful enough to perform a wide range of other operations and queries about the structure of our graph in logarithmic time. We discuss below some of these other possible operations.

\begin{itemize}
\item
\textbf{Cycle length:}
Given a node $x$, we would like to return the length of the cycle of the connected component it belongs to. We saw how to do this in our implementation of Query in the previous section; if $r$ is the root of the component containing $x$ and $r$ points to a node $s$, then the cycle length is just $1$ greater than the depth of $s$. 
\item
\textbf{Cycle belonging:}
Does $x$ belong to the cycle of its connected component? If this is the case, then $x$ must be an ancestor of the node $s$ defined in the previous point; since we know the depths of $x$ and $s$, we can check for this with one query operation. 
\item
\textbf{Inverse Query:}
Given two nodes $x$ and $y$, we want to search for $k$ such that $Query(x, k) = y$ (or if no such $k$ exists, then output so). Note that if such a $k$ exists, then $y$ must either belong to the path from $x$ to the root or on the cycle (the path from $y$ to the root). Since we know the depths of $x$ and $y$, we can check both of these cases easily with query operations like in the previous point. Once we know which (or both of) these cases hold, we can then immediately compute $k$ given the depths of $x$, $y$, and the length of the cycle (the depth of $s$). 
\item
\textbf{Delete:}
If a node $x$ has no incoming edges, we want to be able to delete it. Since $x$ has no incoming edges, it must be a leaf in its associated link-cut tree, so it suffices to just delete this leaf from the link-cut tree (which we can do in $O(\lg n)$ time). 
\item
\textbf{Subdivide:}
Given a directed edge $e$ from $x$ to $f(x)$ in our graph, we would like to add an intermediate node $y$ such that the successor of $x$ is $y$ and the successor of $y$ is $f(x)$. This is easy to do with two update queries. 
\item
\textbf{`Least Common Ancestor':}
Given $x$ and $y$ belonging to the same component, return the `least common ancestor' of $x$ and $y$. Note that, since each connected component contains a cycle, this isn't always uniquely defined, so we will assume for now that if the least common ancestor of $x$ and $y$ lies on the cycle, we can return any point on the cycle. If this common ancestor doesn't lie on the cycle, then $x$ and $y$ must belong to the same tree in the forest obtained by deleting this cycle from the graph; in this case, the least common ancestor is well-defined.

Implementing this operation is slightly trickier than the implementation for the preceding operations, in that (as with query), we must consider the internal structure of the link-cut tree. Note that the least common ancestor of $x$ and $y$ in the link-cut tree is also a valid output for the least common ancestor of $x$ and $y$ in our graph, so it suffices to compute the `link-cut' least common ancestor of $x$ and $y$.

To do this, note that the least common ancestor of $x$ and $y$ is the deepest node that belongs to both the path from the root down to $x$ and the path from the root down to $y$. Now, recall that upon accessing a node $v$, the path from the root to $v$ becomes a preferred path. Therefore, if we first access $x$ and then access $y$, the preferred path starting at the root changes from being a path down to $x$ to being a path down to $y$. This means that at some point, the preferred child of the least common ancestor changes (from the previous ancestor of $x$ to the previous ancestor of $y$); moreover, of all preferred child changes, this one occurs the least deep in the tree. Therefore, by keeping track of the preferred child changes during the access of $y$, we can compute the least common ancestor of $x$ and $y$, as desired.

\item
\textbf{Cycle Proximity:}
Given a node $x$, what is the smallest $k$ for which $Query(x, k)$ belongs to the cycle of the connected component containing $x$ (equivalently, how many steps forward must we walk before we enter the cycle)? 

To answer this, we apply the least common ancestor algorithm from the previous point. Since the nodes forming the cycle of our component correspond exactly to the nodes on the path from the root down to the node $s$, the least common ancestor of $x$ and $s$ is the first ancestor of $x$ which belongs to this cycle. The value $k$ we seek is then simply the difference in depths of this node and $x$. 
\end{itemize}

One operation that we do not know how to perform in time $O(\lg n)$ is \textbf{Contract}; that is, given two adjacent nodes $v$ and $f(v)$, collapse them into a single node whose set of incoming edges is the union of the incoming edges of $v$ and $f(v)$, and whose outgoing edge is to the node $f(f(v))$. The difficulty in this case is that $v$ and $f(v)$ might both have a large number of incoming edges; merging the two nodes would then possibly require a large (possibly linear in $n$) number of update operations.

\begin{openprob}
Can we perform the \textbf{Contract} operation in time $O(\lg n)$?
\end{openprob}

\section{Optimality}\label{opt}

In this section, we prove that no data structure working in the cell-probe model \cite{cprobe} can solve the fast dynamic pointer following problem with time faster than $\Omega(\lg n)$ per operation. 

Our main tool for doing so is the following result due to Demaine and Patrascu \cite{conn}:

\begin{definition}
In the problem of \textit{dynamic connectivity}, we must support the following operations on a set of $n$ nodes: i) add a (possibly directed) edge between two nodes, ii) delete an already existing edge between two nodes, iii) query whether two nodes $v$ and $w$ belong to the same connected component. The problem of \textit{path dynamic connectivity} is the same as the problem of dynamic connectivity, except the connected components of of our graph are always all guaranteed to be directed paths (as in the path grid construction of \cite{conn}).
\end{definition}

\begin{theorem}\label{depa}
Under the cell-probe model, any algorithm that solves the problem of \textit{path dynamic connectivity} requires at least $\Omega(\log n)$ time per operation.
\end{theorem}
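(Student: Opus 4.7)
Since Theorem \ref{depa} is essentially the main lower bound from \cite{conn}, my plan is to invoke the cell-probe machinery developed there and verify that the hard instances fit the path-restricted setting. The argument is an epoch-based information-theoretic bound in the style of Fredman and Saks, refined to give the tight $\Omega(\lg n)$ rather than the older $\Omega(\lg n / \lg \lg n)$ result.

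First, I would set up a hard distribution on operation sequences. Consider a \emph{path grid} with $k = \Theta(\sqrt{n})$ columns of $k$ nodes each, connected by a sequence of perfect matchings realized as directed edges; the graph is always a disjoint union of $k$ directed paths and hence always a valid input for path dynamic connectivity. Updates replace a single matching edge at a time (pairs of updates swap two endpoints inside a column while preserving the path structure), and queries test connectivity between column endpoints, which amounts to evaluating a single value of the composition of the random column-to-column permutations.

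Second, I would partition a long random update sequence into epochs $E_1, E_2, \dots, E_{\Theta(\lg n)}$ of geometrically decreasing sizes $|E_i| = \beta^i$ for a suitably chosen constant $\beta > 1$. The central lemma is that the answers to queries issued after $E_i$ reveal $\Omega(|E_i| \lg k)$ bits of information about the random choices made inside $E_i$. Because each memory cell stores only $O(\lg n)$ bits, this forces $\Omega(|E_i|)$ query-time cell probes into cells whose most recent write occurred during $E_i$. Summing the per-epoch contributions across the $\Theta(\lg n)$ epochs and amortizing over all operations then yields the $\Omega(\lg n)$ bound per operation.

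The hard part will be establishing the information-theoretic lemma for a single epoch: one must exhibit a decoder that recovers the epoch's random permutations from the responses to a carefully chosen interleaving of queries, and then argue that probes into cells last written \emph{before} the epoch (which have been fixed for a long time) or \emph{inside} later, shorter epochs cannot carry enough information to shortcut the decoding. The main accounting difficulty lies in bounding this interference from later epochs, which is precisely where the Patrascu--Demaine refinement improves on the Fredman--Saks chronogram argument; once the per-epoch lemma is in hand, the epoch summation and the specialization to path instances are both routine.
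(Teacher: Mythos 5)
The paper does not actually prove Theorem~\ref{depa}; it is invoked as a black box, cited directly from Demaine and Patrascu \cite{conn} (``Our main tool for doing so is the following result due to Demaine and Patrascu''). So the ``paper's own proof'' of this statement is simply the citation, and the expected move here was to quote \cite{conn} rather than reconstruct its argument. Your reconstruction is a reasonable thing to attempt, but it misdescribes the technique in a way that matters.

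You correctly identify the hard distribution (a $\sqrt{n}\times\sqrt{n}$ grid of disjoint paths realized as compositions of random column permutations, with updates swapping matching edges and queries testing endpoint connectivity). However, the time decomposition you describe --- epochs $E_1,\dots,E_{\Theta(\lg n)}$ of geometrically decreasing size, measured backward from the present --- is the Fredman--Saks chronogram, and that method tops out at $\Omega(\lg n/\lg\lg n)$: one needs successive epochs to shrink by a $\mathrm{poly}(\lg n)$ factor (not a constant $\beta$) to control interference from writes in later, shorter epochs, which caps the number of usable epochs at $O(\lg n/\lg\lg n)$. Calling Patrascu--Demaine's contribution a ``refinement'' of this accounting understates it: they replace the single past-pointing epoch hierarchy with a balanced binary tree over the \emph{entire} operation sequence, and for each of the $\Theta(n)$ internal nodes they lower-bound the \emph{information transfer} --- the number of cells written in the left subtree's time interval and subsequently read in the right subtree's time interval. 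Each cell probe is charged to exactly one tree node (the LCA of its write time and read time), so summing the per-node lower bounds over all internal nodes yields $\Omega(n\lg n)$ total probes without any epoch-interference bookkeeping at all; that structural change, not a tighter epoch analysis, is what closes the $\lg\lg n$ gap. Your sketch as written would stall at exactly the ``accounting difficulty'' you flag, and the fix is a different decomposition, not better arithmetic within the one you propose.
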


On the other hand, it turns out that any algorithm that can solve the fast dynamic pointer following problem can solve the path dynamic connectivity problem equally as fast:

\begin{theorem}\label{red}
Any algorithm that solves the dynamic pointer following problem in $O(t)$ time per operation can also solve the path dynamic connectivity problem in $O(t)$ time per operation.
\end{theorem}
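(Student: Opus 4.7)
The plan is to encode every directed path of a path dynamic connectivity instance as a pseudoforest component in which all non-terminal nodes point forward along the path and the single tail node points to itself (a self-loop). Initially, every node sits alone in its own trivial path, so we start the pointer-following structure with $f(v) = v$ for every $v$; this is a valid pseudoforest because every node has outdegree $1$, and the invariant ``$u$ is the tail of some path iff $f(u) = u$'' holds.

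Next I would simulate each path dynamic connectivity operation using $O(1)$ dynamic pointer following operations. For an insertion of a directed edge $(u,v)$, the path invariant guarantees $u$ is currently a tail (so $f(u) = u$) and $v$ is a head of a distinct path; a single \textbf{Update} redirecting $u$'s pointer from itself to $v$ merges the two paths and preserves the invariant. For a deletion of an edge $(u,v)$, the invariant gives $f(u) = v$; a single \textbf{Update} setting $f(u) = u$ splits the path into two, again preserving the invariant. For a connectivity query on $(u,v)$, I would issue \textbf{Query}$(u,n)$ and \textbf{Query}$(v,n)$: because any path has length at most $n$ and the tail has a self-loop, $f^n(w)$ is exactly the tail of the path containing $w$, and two nodes lie in the same path iff their tails coincide. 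Hence I can answer the connectivity query by comparing the two returned nodes.

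Since each path dynamic connectivity operation is simulated by a constant number of dynamic pointer following operations, an $O(t)$ algorithm for the latter immediately yields an $O(t)$ algorithm for the former, establishing the claim; combined with Theorem \ref{depa}, this forces $t = \Omega(\lg n)$ for any cell-probe solution to dynamic pointer following. The only subtle point to verify carefully is that the one-to-one correspondence between ``path in the connectivity instance'' and ``tree with a self-loop at the tail in the pointer-following instance'' is preserved under every allowed update; this is straightforward but is the one place where I would be explicit, since any departure from the invariant would either break the semantics of queries or violate the pseudoforest assumption. No extra bookkeeping is required outside the pointer-following data structure itself, so the constant overhead bound $O(1)$ simulated operations per connectivity operation is genuine.
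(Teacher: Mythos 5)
Your proposal matches the paper's proof essentially step for step: self-loops encode path tails, edge insertion and deletion are each simulated by a single pointer update, and connectivity is tested by comparing $\textbf{Query}(u,n)$ with $\textbf{Query}(v,n)$. Your explicit statement of the invariant (``$u$ is a tail iff $f(u)=u$'') is a nice clarification, but the argument is otherwise the same.
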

\begin{proof}
We demonstrate how to simulate all the operations required for the path dynamic connectivity problem with a constant number of operations in the dynamic pointer following (DPF) problem model. 

Initially, assume that in our path dynamic connectivity instance we start with no edges, and in our DPF instance each node starts with a pointer to itself. To simulate addition of a directed edge from node $v$ to $w$, we first claim that node $v$ must currently point to itself in the DPF instance; if this is not the case, then this will imply that $v$ already has an outgoing edge, contradicting the condition that at each point in time the set of connected components is a collection of directed paths. Similarly, to simulate deletion of the directed edge from node $v$ to $w$, we simply set $v$ to point to itself. In this way, all the directed edges in our DPF instance that are not self-loops correspond with all the directed edges in our path dynamic connectivity instance. 

Now, note that if $v$ and $w$ belong to the same component (which is a directed path), then $Query(v, n)$ and $Query(w, n)$ must both be equal to the node at the end of this directed path (here $n$ is equal to the number of nodes in our instance). On the other hand, if $v$ and $w$ do not belong to the same component, it is impossible for $Query(v, n) = Query(w,n)$ (this would imply some path exists between $v$ and $w$). Therefore, to check for connectivity between $v$ and $w$, it suffices to check whether $Query(v, n) = Query(w, n)$, which requires only $2 = O(1)$ queries in our DPF model.
\end{proof}

\begin{theorem}
Under the cell-probe model, any algorithm that solves the fast dynamic pointer following problem must spend at least $\Omega(\lg n)$ time per operation.
\end{theorem}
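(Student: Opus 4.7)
The plan is to combine the two preceding results in a direct chain: Theorem \ref{red} reduces path dynamic connectivity to dynamic pointer following with only constant overhead, and Theorem \ref{depa} asserts an $\Omega(\lg n)$ cell-probe lower bound for path dynamic connectivity. Suppose for contradiction that some algorithm $A$ solves the fast dynamic pointer following problem in $o(\lg n)$ time per operation under the cell-probe model. I would then plug $A$ into the simulation described in the proof of Theorem \ref{red}.

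More concretely, given an instance of path dynamic connectivity on $n$ nodes, I maintain an instance of DPF on the same set of $n$ nodes, initialized so that every node points to itself. Each edge-addition or edge-deletion operation is simulated by a single update in the DPF instance, and each connectivity query is simulated by two DPF queries $\textit{Query}(v,n)$ and $\textit{Query}(w,n)$ followed by an equality test, as described in Theorem \ref{red}. Since each path-dynamic-connectivity operation costs $O(1)$ DPF operations, and each DPF operation takes $o(\lg n)$ time by assumption, the resulting algorithm solves path dynamic connectivity in $o(\lg n)$ time per operation in the cell-probe model. This directly contradicts Theorem \ref{depa}, giving the desired lower bound.

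There is essentially no hard step here, since all the content is packaged in Theorems \ref{depa} and \ref{red}; the only subtlety worth flagging is that the reduction of Theorem \ref{red} preserves the cell-probe cost up to a constant factor, which it clearly does because the simulation uses a constant number of DPF operations per path-dynamic-connectivity operation and no additional probes beyond that. Hence any per-operation lower bound on path dynamic connectivity transfers to dynamic pointer following, completing the proof.
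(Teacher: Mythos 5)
Your proposal is correct and is exactly the argument the paper intends; the paper simply states ``This follows immediately from Theorems \ref{depa} and \ref{red},'' while you have unpacked the reduction and the contradiction explicitly. No gaps.
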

\begin{proof}
This follows immediately from Theorems \ref{depa} and \ref{red}.
\end{proof}

\end{document}